\title{Maximum Independent Set when excluding an induced minor: $K_1 + tK_2$ and $tC_3 \uplus C_4$}
\titlerunning{MIS when excluding an induced minor: $K_1 + tK_2$ and $tC_3 \uplus C_4$}
\author{\'{E}douard Bonnet$^*$}{$^*$Corresponding author: edouard.bonnet@ens-lyon.fr\\Univ Lyon, CNRS, ENS de Lyon, Université Claude Bernard Lyon 1, LIP UMR5668, France \and \url{http://perso.ens-lyon.fr/edouard.bonnet/}}{edouard.bonnet@ens-lyon.fr}{https://orcid.org/0000-0002-1653-5822}{}
\author{Julien Duron}{Univ Lyon, CNRS, ENS de Lyon, Université Claude Bernard Lyon 1, LIP UMR5668, France}{julien.duron@ens-lyon.fr}{}{}
\author{Colin Geniet}{Univ Lyon, CNRS, ENS de Lyon, Université Claude Bernard Lyon 1, LIP UMR5668, France}{colin.geniet@ens-lyon.fr}{https://orcid.org/0000-0003-4034-7634}{}
\author{Stéphan Thomassé}{Univ Lyon, CNRS, ENS de Lyon, Université Claude Bernard Lyon 1, LIP UMR5668, France}{stephan.thomasse@ens-lyon.fr}{}{}
\author{Alexandra Wesolek}{Simon Fraser University, Burnaby, BC, Canada}{agwesole@sfu.ca}{https://orcid.org/0000-0003-4841-5937}{Supported by the Vanier Canada Scholarship Program.}
\authorrunning{\'E. Bonnet, J. Duron, C. Geniet, S. Thomassé, A. Wesolek}
\keywords{Maximum Independent Set, forbidden induced minors, quasipolynomial-time algorithms}
\newtheorem*{rep@theorem}{\rep@title}
\newcommand{\newreptheorem}[2]{%
\newenvironment{rep#1}[1]{%
 \def\rep@title{#2 \ref{##1}}%
 \begin{rep@theorem}}%
 {\end{rep@theorem}}}
\newcommand{\mis}{\textsc{Max Independent Set}\xspace}
\newcommand{\smis}{\textsc{MIS}\xspace}
\renewcommand{\geq}{\geqslant}
\renewcommand{\leq}{\leqslant}
\renewcommand{\ge}{\geq}
\newcommand\dist{\text{dist}}
\newtheorem{question}{Question}
\crefname{question}{Question}{Questions}
\crefname{table}{Table}{Tables}
\crefname{claim}{Claim}{Claims}
\crefname{reptheorem}{Theorem}{Theorems}
\begin{document}

\maketitle

\begin{abstract}
  Dallard, Milanič, and Štorgel [JCTB '24] ask if, for every class excluding a fixed planar graph $H$ as an induced minor, \textsc{Maximum Independent Set} can be solved in polynomial time, and show that this is indeed the case when $H$ is any planar complete bipartite graph, or the 5-vertex clique minus one edge, or minus two disjoint edges.
  A positive answer would constitute a far-reaching generalization of the state-of-the-art, when we currently do not know if a polynomial-time algorithm exists when $H$ is the 7-vertex path.
  Relaxing tractability to the existence of a quasipolynomial-time algorithm, we know substantially more.
  Indeed, quasipolynomial-time algorithms were recently obtained for the $t$-vertex cycle, $C_t$ [Gartland et al., STOC '21], and the disjoint union of $t$ triangles, $tC_3$ [Bonamy et al., SODA '23].

  We give, for every integer $t$, a polynomial-time algorithm running in $n^{O(t^5)}$ when $H$ is the friendship graph $K_1 + tK_2$ ($t$ disjoint edges plus a~vertex fully adjacent to them), and a quasipolynomial-time algorithm running in $n^{O(t^2 \log n)+f(t)}$, with $f$ a~single-exponential function, when $H$ is $tC_3 \uplus C_4$ (the disjoint union of $t$ triangles and a 4-vertex cycle).
  The former generalizes the algorithm readily obtained from Alekseev's structural result on graphs excluding $tK_2$ as an induced subgraph [Alekseev, DAM '07], while the latter extends Bonamy et~al.'s result.
\end{abstract}

\section{Introduction}\label{sec:intro}

The \mis (\smis for short) problem asks for a~largest \emph{independent set} of its input graph $G$, i.e., a~subset of pairwise non-adjacent vertices in $G$.
In its decision form, the input is a graph $G$ and an integer $k$, and the question is whether $G$ admits an independent set of size at least~$k$.

Besides the ubiquitous usefulness that such a fundamental problem has within combinatorial optimization, and notably in the areas of packing, scheduling, and coloring, \smis (or equivalently \textsc{Maximum Clique}, the same problem in the complement graph) has a~very wide range of applications, as evidenced, for instance, in map labeling~\cite{Verweij99}, coding theory~\cite{Butenko02}, spatial scheduling~\cite{Eddy21}, genetic analysis~\cite{Abraham14}, information retrieval~\cite{Augustson70}, macromolecular docking~\cite{Gardiner00}, and sociometry~\cite{Forsyth46} (also see Butenko's thesis~\cite{Butenko03}).
It is thus unfortunate that this problem is not only hard to solve but also very resistant to approximation.
Indeed, the decision version of \smis is NP-complete~\cite{GJ79} and W$[1]$-complete~\cite{Downey95}, while its optimization version cannot be approximated within a~factor $n^{1-\varepsilon}$ on $n$-vertex graphs, for any $\varepsilon > 0$, unless P$=$NP~\cite{Hastad96,Zuckerman07}.

In spite of this, theorists and practitioners have put a lot of effort into designing efficient algorithms for \smis. 
In parallel with generic approaches via integer programming, high-performance exact and heuristic \smis solvers have emerged in recent years, based on diverse methods such as kernelization and evolutionary approaches~\cite{Lamm17}, deep reinforcement learning~\cite{Ahn20}, graph neural networks~\cite{Pontoizeau21}, and dataless training (where backpropagation is applied to a loss function based instead on the input)~\cite{Alkhouri22}.
On the theory side, exact exponential algorithms have been developed for decades culminating in a~running time below $1.2^n n^{O(1)}$~\cite{Xiao17}.

Another approach is to try and exploit the structure that the input graphs may have.
Indeed, in all the aforementioned applications, inputs are not uniformly sampled over all $n$-vertex graphs: They instead bear some structural properties, and in some cases, might avoid some specific patterns.
Graph theory\footnote{We refer the reader to~\cref{sec:prelim} for the relevant background in graph theory.} offers two main notions of \emph{patterns} or containment: the natural and straightforward \emph{subgraphs} (obtained by removing vertices and edges), and the deeper \emph{minors} (further allowing to contract edges).
Both notions come with an induced variant, when edge removals are disallowed, bringing the number of containment types to four.
It is then sensible to determine the patterns $H$ whose absence makes \smis (more) tractable.
It turns out that this question is completely settled for subgraphs and minors.

For the subgraph containment, the argument is the following.
By the grid minor theorem~\cite{Robertson86}, the class of graphs excluding $H$ as a subgraph has bounded treewidth if (and only if) all the connected components of $H$ are paths and subdivided \emph{claws} (i.e., stars with three leaves); thus \smis can be solved in polynomial time in this class, for instance by Courcelle's theorem~\cite{Courcelle90}.
If instead $H$ has a~connected component that is neither a~path nor a~subdivided claw, \smis remains NP-complete since such a~class either contains all subcubic graphs or the $2|V(H)|$-subdivision of every graph, two families of graphs on which \smis is known to be NP-complete~\cite{Alekseev82,AlKa00,Poljak74}.

For minors, the dichotomy hinges on whether $H$ is planar.
Indeed, if $H$ is planar, then the class of graphs excluding $H$ as a minor has bounded treewidth (again, mainly by the grid minor theorem), and \smis can be solved efficiently.
If $H$ is non-planar, then the $H$-minor-free graphs include all planar graphs for which \smis is known to be NP-complete~\cite{Garey76}.

The question is more intriguing for the induced containments, and the \emph{induced subgraph} case has received a lot of attention.
While it has long been known that if $H$ is \emph{not} the disjoint union of paths and subdivided claws, \smis remains NP-complete on graphs without $H$ as an induced subgraph~\cite{Alekseev82,Poljak74}, it has been conjectured that \smis is otherwise polynomial-time solvable.
This has been proven when $H$ is the 6-vertex path \cite{GrzesikKPP22}, a claw with exactly one edge subdivided~\cite{Alekseev04,Lozin08}, or any disjoint union of claws~\cite{BrandstadtM18a}.
The latter result extends a~polynomial-time algorithm (essentially) due to Alekseev when $H$ is any disjoint union of edges~\cite{Alekseev07}.
The author indeed proves that the total number of maximal independent sets is polynomially bounded.
One can then enumerate all the maximal independent sets in polynomial time (following Alekseev's proof, or using the~generic output-sensitive algorithm of Tsukiyama et al.~\cite{Tsukiyama77}), and thus find a~maximum independent set.

While we currently do not know of a polynomial-time algorithm when $H$ is the 7-vertex path, Gartland and Lokshtanov~\cite{Gartland20} have obtained a quasipolynomial-time algorithm when $H$ is $P_t$, the $t$-vertex path, for any positive integer $t$; also see~\cite{PilipczukPR21}. 
Supporting the existence of a~polynomial-time algorithm when $H$ is $S_{i,j,k}$, the claw whose three edges are subdivided $i-1$, $j-1$, and $k-1$ times, respectively, a~quasipolynomial-time algorithm \cite{Gartland24}, and polynomial-time algorithms among bounded-degree graphs~\cite{Abrishami22}, and later, among graphs excluding a~fixed complete bipartite graph as a~subgraph \cite{Abrishami24} have been found.
The parameterized complexity of \smis when excluding a fixed induced subgraph has been studied~\cite{BonnetBCTW18,BonnetBTW19,Dabrowski12}, but the mere statement of which $H$ make the problem fixed-parameter tractable (and which ones keep it W$[1]$-complete) is unclear~\cite{BonnetBTW19}.

We eventually arrive at the \emph{induced minor} containment, the topic of the current paper.
As for minors, the class of all graphs excluding a~non-planar graph $H$ as an induced minor contains all planar graphs; hence \smis remains NP-complete in such a class.
However we do not know of a~\emph{planar} graph $H$ for which \smis remains NP-complete on $H$-induced-minor-free graphs.
This has led Dallard, Milanič, and Štorgel~\cite{Dallard-3} to ask if such classes exist:
\begin{question}\label{q:poly}
  Is it true that for every planar graph $H$, \mis can be solved in polynomial time in the class of graphs excluding $H$ as an induced minor?
\end{question}
A~first observation is that avoiding $H$ as an induced minor implies avoiding it as an induced subgraph.
Thus \cref{q:poly} is settled for $P_6$, $S_{1,1,2}$, and $tS_{1,1,1}$ (where $tG$ denotes the disjoint union of $t$ copies of $G$).
The same authors~\cite{Dallard-3} further obtain a~polynomial-time algorithm when $H$ is $K_5^-$ (the 5-vertex clique minus an edge), $K_{2,t}$ (the complete bipartite graph with 2 vertices fully adjacent to $t$ vertices), and $W_4=K_1+C_4$ (a 4-vertex cycle $C_4$ with a fifth vertex fully adjacent to the cycle).
All three cases were shown by bounding the so-called \emph{tree-independence number} (i.e., treewidth where \emph{bag size} is replaced by \emph{independence number of the subgraph induced by the bag})~\cite{Dallard-3}, in which case a~polynomial-time algorithm can be derived for \smis using the corresponding tree-decompositions~\cite{Dallard-2}.
They also show that this is as far as this sole technique can go: $H$-induced-minor-free graphs have bounded \emph{tree-independence number} if and only if $H$ is edgeless or an induced minor of $K_5^-$, $K_{2,t}$, or $W_4$~\cite{Dallard-3}.
The framework of \emph{potential maximal cliques}~\cite{Bouchitte01} and the \emph{container method} have led to a polynomial-time algorithm when $H=C_5$~\cite{Abrishami21,Chudnovsky20}.

\cref{q:poly} is a~beautiful question and, if true, a very difficult one.
Indeed, $H=P_7$, the 7-vertex path, is a~very simple planar graph for which we currently do not know such a~polynomial-time algorithm.
A~natural relaxation of~\cref{q:poly} is to only request a~quasipolynomial-time algorithm:
\begin{question}\label{q:qpoly}
  Is it true that for every planar graph $H$, \mis can be solved in quasipolynomial time in the class of graphs excluding $H$ as an induced minor?
\end{question}

We know somewhat more about~\cref{q:qpoly}.
There is a~quasipolynomial-time algorithm for \smis in $C_t$-induced-minor-free graphs~\cite{Gartland21}, building upon the $H=P_t$ case.
Recently, Bonamy et al.~\cite{Bonamy23} presented a~quasipolynomial-time algorithm when $H$ is $tC_3$, i.e., the disjoint union of $t$ triangles.
(When $K_{1,t}$, the star with $t$ leaves, is further excluded as an induced subgraph, a~polynomial-time algorithm has been obtained~\cite{Ahn25}.)
See~\cref{tbl:mis-complexity} for a~summary of the introduction.

\begin{table}
  \begin{tabular}{lcccc}
    \toprule
    $H$ excluded as     & subgraph                    & minor          & induced subgraph               & \textbf{induced minor}     \\
    \midrule
 in P                  & $tS_{t,t,t}$                  & planar         &
 $P_6$, $S_{1,1,2}$, $tS_{1,1,1}$  & $K_5^-\!$, $K_{2,t}$, $W_4$, $C_5$, \framebox{$K_1 \! + \! tK_2$} \\
 known in QP      & $-$                 &  $-$   &  $tS_{t,t,t}$                   & $C_t$ ($t\geqslant 6$), \framebox{$tC_3 \uplus C_4$}         \\
 NP-c                  & $\neg tS_{t,t,t}$  &  non-planar     &  $\neg tS_{t,t,t}$             &     non-planar                       \\
 open P / NP-c        & $-$                &  $-$    &  $P_7$, \dots                   &    $P_7$, $C_6$, \dots                        \\
 open QP / NP-c       & $-$                &  $-$    &  $-$    &    $C_4 \uplus C_4$, \dots              \\
    \bottomrule
  \end{tabular}
  \caption{The complexity of \mis when $H$ is excluded as one of the four main types of patterns.
    ``$\neg tS_{t,t,t}$'' means that $H$ is not a subgraph of $tS_{t,t,t}$ for any $t$.
    Our results are framed.}
   \label{tbl:mis-complexity}
\end{table}

Expecting an affirmative solution to \cref{q:poly} or \cref{q:qpoly} may seem optimistic.
However, regarding precise running time, we do know that for every planar $H$, \smis is probably not as difficult in $H$-induced-minor-free graphs as it is in general graphs.
Indeed, Korhonen~\cite{Korhonen22} describes a~$2^{O(n/\log^c n)}$-time algorithm, for some constant $c>0$, to solve \smis on $n$-vertex graphs excluding a~fixed planar graph $H$ as an induced minor.
Assuming the Exponential-Time Hypothesis~\cite{ImpagliazzoP01}, such a running time is impossible in general graphs~\cite{ImpagliazzoPZ01}.

\medskip

\textbf{Our results.}
We make some progress regarding \cref{q:poly,q:qpoly}.
Our first contribution is, for every positive integer $t$, a polynomial-time algorithm when $H$ is the \emph{friendship graph} $K_1+tK_2$ (also called \emph{Dutch windmill graph} or \emph{fan}), i.e., $t$ independent edges universally linked to a~$(2t+1)$-st vertex:
\begin{theorem}\label{thm1}
  For every positive integer $t$, \mis can be solved in polynomial-time $n^{O(t^5)}$ in $n$-vertex $K_1+tK_2$-induced-minor-free graphs.
\end{theorem}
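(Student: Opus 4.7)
The central ingredient will be the following structural observation: if $G$ excludes $K_1 + tK_2$ as an induced minor, then for every \emph{connected} subset $A \subseteq V(G)$, the subgraph $G[N(A) \setminus A]$ is $tK_2$-induced-subgraph-free. Indeed, an induced $tK_2$ in $N(A) \setminus A$ together with $A$ contracted to a single branch set would exhibit $K_1 + tK_2$ as an induced minor of $G$. By Alekseev's classical theorem on $tK_2$-induced-subgraph-free graphs, each such $G[N(A) \setminus A]$ then has only $O(n^{2t-2})$ maximal independent sets, all enumerable in polynomial time. Note that the singleton case $A=\{v\}$ already gives a local bound on the number of maximal independent sets in every neighborhood $G[N(v)]$, but the strength of the observation --- applicable to \emph{arbitrary} connected pivots --- will be essential.

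The algorithm I propose is a recursive branch-and-search. A recursive state is a pair $(I_0, A)$, where $I_0$ is the partial independent set already committed and $A$ is a connected ``pivot'' subset along which the search progresses. At each call, I enumerate every maximal independent set $M$ of $G[N(A) \setminus A]$ (restricted to vertices still compatible with $I_0$), of which there are $n^{O(t)}$ by the observation, and branch on the event $I \cap (N(A) \setminus A) = M$; this branch commits $M$ to $I_0$ and recurses on the residual graph with an updated pivot. To also capture the case in which the true $I \cap (N(A) \setminus A)$ is a \emph{proper} subset of every maximal $M$ --- because some $x \in M$ is blocked from $I$ by an external vertex $y \in I \setminus N[A]$ --- a secondary branching guesses the blocking vertex $y$ and absorbs it into the pivot $A$ along a short connecting path, keeping $A$ connected.

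The total running time factors as (branching factor)$^{\text{depth}}$, and the main obstacle is to bound the recursion depth by $O(t^4)$. This is where the pivot-level structural observation, applied to a sequence of nested connected pivots arising throughout the recursion, enters crucially: a long sequence of ``blocking'' branchings would yield many pairwise-non-adjacent edges in $N(A') \setminus A'$ for a growing pivot $A'$, eventually forcing an induced $tK_2$ and contradicting the hypothesis. Turning this intuition into a rigorous progress measure is the delicate part, and I would pursue it via a Ramsey/pigeonhole argument that converts a sufficiently long chain of blocking events into a forbidden induced $tK_2$, so that the chain must terminate within $O(t^4)$ steps. Combined with the $n^{O(t)}$ branching factor per call, this gives the announced $n^{O(t^5)}$ running time of \cref{thm1}.
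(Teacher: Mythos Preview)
Your opening structural observation --- that $G[N(A)\setminus A]$ is $tK_2$-free for every connected $A$ --- is correct and is exactly the engine the paper uses. But from there the two arguments diverge, and your route has a genuine gap at the point you yourself flag as ``the delicate part''.

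The paper does \emph{not} attempt a recursive branch-and-search. Instead it first fixes a set $X$ with $G[X]\simeq tK_2$ chosen to maximise the largest component of $G-X$, and proves (\cref{lem:edgesinC}) that every edge of a component $C$ of $G-X$ lying outside $N_G(X)$ is a $2$-cut of $C$, each side of which meets $N_G(X)$. This is then leveraged (\cref{prop:path-dec}) to exhibit, between any two BFS layers of $C$ that are $O(t^4)$ apart, a separator of size at most $2t^2$: one iteratively peels off such cut-edges along shortest paths, and a $K_1+tK_2$ argument caps the number of iterations at $t^2$. The result is a path-decomposition of $C$ whose bags sit in $O(t^4)$ consecutive BFS layers and whose adhesions have size $\le 2t^2$; dynamic programming along it, using your observation layer-by-layer inside each bag (\cref{lem:bounded-diameter}), gives $n^{O(t^5)}$.

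Your scheme, by contrast, needs a recursion-depth bound of $O(t^4)$, and the sketch you give for it does not go through as stated. A ``blocking'' step produces an edge $xy$ with $x\in N(A)\setminus A$ and $y\notin N[A]$; once you absorb $y$ into the pivot, that edge no longer lives in $N(A')\setminus A'$ at all, so a chain of blocking events does not accumulate an induced matching in the boundary of the \emph{current} pivot. Nor is it clear why the connecting path to $y$ should be short, or how the primary branch (committing a maximal $M$ and deleting $N[M]$) interacts with the pivot when the residual graph fragments into many components. Some progress measure \emph{might} be salvageable, but it would essentially have to reproduce the content of \cref{lem:edgesinC} and \cref{prop:path-dec}: the depth bound really rests on the extremal choice of a $tK_2$-inducing set $X$ and the cut structure it forces, not on a direct Ramsey argument on blocking vertices. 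As written, the proposal is a plausible heuristic but not a proof.
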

This extends Alekseev's result~\cite{Alekseev07} for graphs excluding $tK_2$ as an induced subgraph, or equivalently, as an induced minor.
We indeed use this result to first derive a~polynomial-time algorithm in subgraphs of $K_1+tK_2$-induced-minor-free graphs $G$ induced by vertices from a~bounded number of breadth-first search (BFS) layers of $G$.

We then consider the connected components of our input graph $G$ when deprived of a~subset~$X$ of vertices inducing $tK_2$ and, subject to that property, maximizing the order of the largest connected component in $G-X$. 
We show that, due to this careful selection of~$X$, every component $C$ of $G-X$ admits an efficiently constructible path-decomposition $\mathcal P$ with \emph{bounded adhesion} (i.e., any two distinct bags have a~bounded intersection), each bag of which is contained in a~bounded number of consecutive BFS layers of $C$.
Hence \smis can be solved efficiently within a~bag, by our opening step (see previous paragraph).
This part is quite technical, but mostly to justify the existence of $\mathcal P$.
The algorithm itself remains simple.

\cref{thm1} is then obtained by exhaustively finding $X$ and guessing its intersection~$X'$ with a~maximum independent set of $G$, and performing dynamic programming on the connected components of $G-X$, deprived of $N(X')$.
The dynamic-programming table is filled via the efficient algorithm when handling an induced subgraph contained in few BFS layers.

\medskip

Our second contribution is a quasipolynomial-time algorithm when $H$ is $tC_3 \uplus C_4$, the disjoint union of $t$ triangles and a~4-vertex cycle:
\begin{theorem}\label{thm2}
  For every positive integer $t$, \mis can be solved in quasipolynomial-time $n^{O(t^2 \log n)+f(t)}$ (where $f$ is single-exponential) in $tC_3 \uplus C_4$-induced-minor-free graphs.
\end{theorem}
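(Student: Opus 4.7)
The plan is to combine the quasipolynomial-time algorithm of Bonamy et al.~\cite{Bonamy23} for $tC_3$-induced-minor-free graphs with a polynomial-time algorithm for $C_4$-induced-minor-free graphs. The latter tractability comes for free: $C_4 = K_{2,2}$, and by~\cite{Dallard-3}, $K_{2,2}$-induced-minor-free graphs have bounded tree-independence number, so \mis is polynomial in such graphs via the tree-decomposition framework of~\cite{Dallard-2}.

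The pivotal structural observation is the following: if $G$ contains $tC_3$ as an induced minor realised by branch sets $B_1,\dots,B_{3t}$, then the subgraph $G' := G - \bigcup_i N[B_i]$ must be $C_4$-induced-minor-free. Indeed, any $C_4$ induced minor inside $G'$ is realised by branch sets disjoint from each $B_i$ and with no edges to any $B_i$ (since we deleted $N[B_i]$ in its entirety); appending these four branch sets to the $B_i$'s would produce $tC_3 \uplus C_4$ as an induced minor of $G$, contradicting the hypothesis.

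On this basis, the algorithm proceeds recursively, following the template of Bonamy et al. Given an induced subgraph $H$ of $G$, one first searches for a $tC_3$ induced minor in $H$; if none is found, then $H$ is $tC_3$-induced-minor-free and we invoke Bonamy et al.'s quasipolynomial algorithm directly. This search and preprocessing account for the $n^{t^{O(1)}}$ factor in the running time. Otherwise, we extract such a $tC_3$ induced minor on vertex set $X$ of size polynomial in $t$, and branch on the intersection of a maximum independent set of $H$ with a carefully chosen portion around $X$; every branch should either reduce to a strictly smaller subinstance of the same form (on which we recurse) or isolate the structure so that the remaining subgraph is $C_4$-induced-minor-free and hence handled in polynomial time via~\cite{Dallard-2,Dallard-3}. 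Controlling the recursion depth at $O(t^2 \log n)$ with polynomial branching factor at each level yields the claimed $n^{O(t^2 \log n)}$ contribution.

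The main obstacle will be the branching design. The naive approach of guessing only the intersection $I$ of the MIS with $X$ leaves a subproblem on $V(H) \setminus N[I]$ that is not a priori $C_4$-induced-minor-free, since our structural observation only guarantees this property on $V(H) \setminus N[X]$, and $N[I] \subsetneq N[X]$ in general. One must therefore guess additional structure inside $N[X] \setminus N[I]$ before recursing, and prove that this combined guessing step (a) reduces a suitable progress parameter at each recursive level, and (b) has polynomial branching factor. This requires a delicate extension of Bonamy et al.'s potential-based structural lemma, tailored to exploit the interplay between $tC_3$ and $C_4$ induced minors forced by the $tC_3 \uplus C_4$ exclusion.
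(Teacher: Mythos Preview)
Your structural observation is correct and your instinct to reduce to a class where \smis is tractable is the right one, but the proposal has a genuine gap that you yourself flag and do not close: after guessing $I = \text{MIS} \cap X$, the set $N[X] \setminus N[I]$ can have size polynomial in $n$, carries no a~priori structure, and must still be dealt with before you land in a $C_4$-induced-minor-free graph. Appealing to ``a delicate extension of Bonamy et al.'s potential-based structural lemma'' is not a proof; nothing in \cite{Bonamy23} suggests how to control this set, and without that control the recursion has unbounded branching factor. A~secondary issue is that you speak of finding a $tC_3$ induced \emph{minor} on a vertex set $X$ of size polynomial in~$t$, but branch sets of an induced minor can be arbitrarily large; you would need to argue (as the paper does, with $t+2$ in place of~$t$) that in a $tC_3 \uplus C_4$-induced-minor-free graph any such minor can be taken to be an induced \emph{subgraph}.

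The paper circumvents the $N[X] \setminus N[I]$ problem entirely by never trying to remove full neighborhoods. Instead it first runs a quasipolynomial branching step (\cref{lem:if-short-hole}) to eliminate all holes of length $\le 6$, then finds $t+2$ pairwise independent \emph{triangles} $T_1,\dots,T_{t+2}$ as an induced subgraph, and for each pair $i<j$ computes a minimal $(T_i,T_j)$-separator $S_{i,j}$ in $G - \bigcup_{k \neq i,j} N(T_k)$ together with $N_{i,j} := N(T_i) \cap N(T_j)$. The point is that each $S_{i,j}$ and each $N_{i,j}$ is a \emph{clique} (the latter uses the absence of short holes), so any independent set meets $\bigcup_{i<j} (S_{i,j} \cup N_{i,j})$ in at most $O(t^2)$ vertices, which can be guessed at cost $n^{O(t^2)}$. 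Removing these $O(t^2)$ cliques leaves a \emph{chordal} graph---a stronger conclusion than $C_4$-induced-minor-free---on which \smis is linear-time. The separator/clique idea is what replaces your unspecified ``additional structure inside $N[X] \setminus N[I]$''.
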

We first perform a~quasipolynomial branching rule to get rid of holes of length at most 6 (i.e., induced cycles of length $4$, $5$, or $6$). 
We then assume that the input graph $G$ is not $(t+2)C_3$-induced-minor-free, for otherwise we conclude with Bonamy et al.'s algorithm~\cite{Bonamy23}.
Thus $G$, being $tC_3 \uplus C_4$-induced-minor-free, has to admit $(t+2)C_3$ as an induced subgraph, i.e., a collection $T_1, \ldots, T_{t+2}$ of $t+2$ pairwise vertex-disjoint and non-adjacent triangles.
We define $S_{i,j}$, minimally separating $T_i$ and $T_j$ in the graph $G$ deprived of the neighborhoods of the other triangles $T_k$ (with $k \neq i, j$).

We show that each $S_{i,j}$ induces a clique.
So does every intersection $N_{i,j}$ of the neighborhood of two distinct triangles $T_i, T_j$ of the collection (this is where getting rid of the holes of length at most~6 comes into play).
We can therefore exhaustively guess the intersection of a~maximum independent set with the union of the sets $S_{i,j}$ and $N_{i,j}$ (for every $i < j \in [t+2]$).
We finally observe that $G'=G-\bigcup_{i \neq j \in [t+2]}(S_{i,j} \cup N_{i,j})$ is chordal, since the presence of a~hole $H$ in $G'$ would imply the existence in $G$ of $t$ independent triangles in the non-neighborhood of $H$, a~contradiction to the $tC_3 \uplus C_4$-induced-minor-freeness of $G$.
We thus conclude by using a~classic algorithm for \smis in chordal graphs~\cite{Gavril74,Rose76}.

\medskip

In~\cref{sec:prelim} we introduce the relevant graph-theoretic background.
In~\cref{sec:p} we prove~\cref{thm1}, and in~\cref{sec:qp} we prove \cref{thm2}.

\section{Preliminaries}\label{sec:prelim}

If $i \leqslant j$ are two integers, we denote by $[i,j]$ the set of integers $\{i,i+1,\ldots,j-1,j\}$, and by~$[i]$, the set $[1,i]$.
We denote by $V(G)$ and $E(G)$ the set of vertices and edges of a graph~$G$, respectively.
We denote by $G_1 \simeq G_2$ the fact that the two graphs $G_1$ and $G_2$ are \emph{isomorphic}, i.e., equal up to renaming their vertex sets.
For $S \subseteq V(G)$, the \emph{subgraph of $G$ induced by $S$}, denoted $G[S]$, is obtained by removing from $G$ all the vertices that are not in $S$ (together with their incident edges).
Then $G-S$ is a shorthand for $G[V(G)\setminus S]$.
A~graph $H$ is an \emph{induced subgraph} of $G$ if there is an $S \subseteq V(G)$ such that $G[S] \simeq H$.

For $G$ a graph and a set $X \subseteq V(G)$, $E_G(X)$ (or simply $E(X)$) is a shorthand for $E(G[X])$.
For $G$ a graph and $X,Y \subseteq V(G)$ two disjoint sets, $E_G(X,Y)$ denotes the set of edges of $E(G)$ with one endpoint in $X$ and the other endpoint in $Y$.
We denote by $N_G(v)$ and $N_G[v]$, the open, respectively closed, neighborhood of $v$ in $G$.
For $S \subseteq V(G)$, we set $N_G(S) := \bigcup_{v \in S}N_G(v) \setminus S$ and $N_G[S] := N_G(S) \cup S$.
We may omit the subscript if $G$ is clear from the context.
A~\emph{connected component} is a~maximal connected induced subgraph.

Two cycles $C, C'$ are said to be \emph{independent} if they are vertex-disjoint and there is no edge between $C$ and $C'$.
A collection of cycles is \emph{independent} if they are pairwise independent.
Two vertex subsets $X, Y \subseteq V(G)$ \emph{touch} if $X \cap Y \neq \emptyset$ or there is an edge $uv \in E(G)$ with $u \in X$ and $v \in Y$.
Then two (or more) cycles are independent if and only if they do \emph{not} touch.
We say that $X, Y \subseteq V(G)$ \emph{touch in $Z$} if $X \cap Y \cap Z \neq \emptyset$ or there is an edge $uv \in E(G)$ with $u \in X \cap Z$ and $v \in Y \cap Z$, or equivalently, if $X \cap Z$ and $Y \cap Z$ touch.

A graph $H$ is an \emph{induced minor} of a graph $G$ if $H$ can be obtained from $G$ by a sequence of vertex deletions and edge contractions.
A~\emph{minor} is the same but also allows edge deletions.
Equivalently an induced minor $H$---with vertex set, say, $\{v_1, \ldots, v_{|V(H)|}\}$---of $G$ can be defined as a vertex-partition $B_1, \ldots, B_{|V(H)|}$ of an induced subgraph of $G$, such that every $G[B_i]$ is connected and $v_iv_j \in E(H)$ if and only if $E_G(B_i,B_j) \neq \emptyset$ (i.e., when the disjoint sets $B_i$ and $B_j$ touch).
Observe indeed that contracting each $B_i$ into a single vertex (which is possible since each $B_i$ induces a~connected subgraph) results in $H$.
A graph $G$ (resp.~a graph class) is said to be \emph{$H$-induced-minor-free} if $H$ is not an induced minor of $G$ (resp.~no graph of the class admits $H$ as an induced minor).

We denote by $C_\ell$ the $\ell$-vertex cycle, and by $K_\ell$, the $\ell$-vertex complete graph.
A~\emph{hole} is an induced cycle of length at least four.
A~graph is~\emph{chordal} if it has no hole.
For two disjoint sets $X, Y \subseteq V(G)$ in a graph $G$, an \emph{$(X,Y)$-separator} is a (possibly empty) set $S \subseteq V(G) \setminus (X \cup Y)$ such that there is no path between $X$ and $Y$ in $G-S$.
An $(X,Y)$-separator is \emph{minimal} if no proper subset of it is itself an $(X,Y)$-separator.

The \emph{disjoint union} $G_1 \uplus G_2$ of two graphs $G_1, G_2$ has vertex set $V(G_1) \uplus V(G_2)$ and edge set $E(G_1) \uplus E(G_2)$, where $V(G_1) \uplus V(G_2)$ presupposes that the vertex sets of $G_1$ and $G_2$ are disjoint.
If $t \geqslant 2$ is an integer and $G$ a graph, $tG$ is the graph $G \uplus (t-1)G$, and $1G$ is simply~$G$.
The~\emph{join} $G_1 + G_2$ of two graphs $G_1, G_2$ has vertex set $V(G_1) \uplus V(G_2)$ and edge set $E(G_1) \uplus E(G_2) \uplus \{uv~:~u \in V(G_1), v \in V(G_2)\}$.
In other words, the join of $G_1$ and $G_2$ is obtained from their disjoint union by adding all possible edges between $G_1$ and $G_2$.

A~\emph{breadth-first search} (BFS) \emph{layering} in $G$ from a~vertex $v \in V(G)$ (or from a~connected set $S \subseteq V(G)$) is a partition of the remaining vertices into $L_1, L_2, \ldots$ such that every vertex of $L_i$ is at distance exactly $i$ from $v$ (or from $S$).
Such an $L_i$ is called a \emph{BFS layer} of $G$ (from $v$, or from $S$).
Note that there cannot be an edge in $G$ between $L_i$ and $L_j$ if $|i-j|>1$.

A~\emph{path-decomposition} of a graph $G$ is a list of vertex subsets $\mathcal P=(B_1, \ldots, B_h)$ such that
\begin{itemize}
\item $\bigcup_{1 \leq i \leq h} B_i = V(G)$,
\item for every $e \in E(G)$, there is some $B_i$ that contains both endpoints of $e$, and
\item whenever $v \in B_i \cap B_j$ with $i<j$, $v$ is also in all $B_k$ with $i < k < j$.
\end{itemize}
The sets $B_i$ (for $i \in [h]$) are called the \emph{bags} of $\mathcal P$, and the sets $B_i \cap B_{i+1}$ (for $i \in [h-1]$) the \emph{adhesions} of $\mathcal P$.
Path-decomposition $\mathcal P$ has maximum adhesion~$p$ if all of its adhesions have size at most~$p$.
Note that the adhesion $B_i \cap B_{i+1}$, if disjoint from $B_1 \cup B_h$, is a vertex cutset disconnecting $B_1$ from $B_h$.

\section{Polynomial algorithm in $K_1 + tK_2$-induced-minor-free graphs}\label{sec:p}

We first show how to solve \mis in $K_1 + tK_2$-induced-minor-free graphs of bounded diameter.
More generally, we show the following.

\begin{lemma}\label{lem:bounded-diameter}
  Let $t, h$ be fixed non-negative integers.
  Let $G$ be a $K_1 + tK_2$-induced-minor-free $n$-vertex graph, and $L_0 \subseteq V(G)$ such that $G[L_0]$ is connected.
  Let $L_i$, for any $i \in [h]$, be the subset of vertices of $G$ at distance exactly $i$ from $L_0$.
  Then, given as input $G$, $L_0$, and $S \subseteq \bigcup_{1 \leqslant i \leqslant h} L_i$, a~maximum independent set of $H := G[S]$ can be computed in polynomial time $n^{(2t-1)h+O(1)}$.
\end{lemma}
\begin{proof}
  For every $j \in [h]$, $G[L_j]$ has no $tK_2$ induced subgraph (or equivalently, induced minor).
  Indeed, $G[\bigcup_{0 \leqslant i \leqslant j-1} L_i]$ is a connected graph, hence $\bigcup_{0 \leqslant i \leqslant j-1} L_i$ can be contracted to a single vertex, and every vertex in $L_j$ has at least one neighbor in $L_{j-1}$.
  Therefore a~$tK_2$ induced subgraph in $G[L_j]$ would contradict the $K_1 + tK_2$-induced-minor-freeness of~$G$.

  Fix an arbitrary $S \subseteq \bigcup_{1 \leqslant i \leqslant h} L_i$, and consider the induced subgraph~$H := G[S]$.
  In particular $H[L_j \cap S]$ has also no $tK_2$ induced subgraph, for every $j \in [h]$.
  Hence, by a~classical result of Alekseev~\cite{Alekseev07}, $H[L_j \cap S]$ has at most $n^{2t-1}$ maximal independent sets, which can be listed in time $n^{2t+O(1)}$~\cite{Tsukiyama77}.

  We thus exhaustively list every $h$-tuple $(I_1, \ldots, I_h)$ where, for every $j \in [h]$, $I_j$ is a~maximal independent set of $H[L_j \cap S]$, in time $n^{(2t-1)h+O(1)}$.
  Note that if there is an edge in $H$ between $L_i$ and $L_j$, then $|i-j| \leqslant 1$.
  As each $I_j$ (for $j \in [h]$) is an independent set, $H'=H[\bigcup_{j \in [h]} I_j]$ is a bipartite graph as witnessed by the bipartition $(I_1 \cup I_3 \cup \ldots, I_2 \cup I_4 \cup \ldots)$.
  A~maximum independent set $I$ can thus be computed in polynomial time in $H'$.
  Indeed, by the K\H{o}nig-Egerv\'ary theorem~\cite{Konig31}, finding a~maximum independent set in a~bipartite graph boils down to finding a~maximum matching, which can be done in polynomial time (and now, even in almost linear time~\cite{Chen22}) by solving a~maximum flow problem.
  We output the largest independent set $I$ found among every run.
  
  The correctness of the algorithm is based on the observation that a~maximum independent set $I^\star$ of $H$ intersects every $L_i$ (for $i \in [h]$) in an independent set $J_i$, which by definition is contained in a maximal independent set $I_i$ of $H[L_i \cap S]$.
  In the run when every maximal independent set $I_i$ is a~superset of $J_i$, we obtain an independent set with cardinality equal to that of~$I^\star$.
\end{proof}

We say that $G$ is \emph{reduced} if it does not contain degree-1 vertices or degree-2 vertices with adjacent neighbors (i.e., part of a~triangle).
If $e=uv$ is an edge of~$G$, let $G\setminus e$ (resp. $S \setminus e$, $S \cup e$, for some $S \subseteq V(G)$) be the induced subgraph $G[V(G)-\{u,v\}]=G-\{u,v\}$ (resp.~the sets $S \setminus \{u,v\}$, $S \cup \{u,v\}$).
More generally, for a collection $e_1=u_1v_1, \ldots, e_k=u_kv_k$ of edges of $G$, we denote by $G \setminus \{e_1, \ldots, e_k\}$ the induced subgraph $G[V(G)-\{u_1, v_1, \ldots, u_k, v_k\}]=G-\{u_1, v_1, \ldots, u_k, v_k\}$.

\begin{lemma}\label{lem:edgesinC}
  Let $G$ be a reduced connected $K_1 + tK_2$-induced-minor-free graph containing $tK_2$ as an induced subgraph.
  Let $X \subseteq V(G)$ maximize the order of~a largest component of $G' := G - X$, among those sets $X$ such that $G[X] \simeq tK_2$.
  Then for any $e \in E(G'-N_G(X))$ contained in a connected component $C$ of $G'$,
  \begin{enumerate}
  \item $C \setminus e$ is disconnected, and
  \item each connected component of $C \setminus e$ contains a vertex in $N_G(X)$.
  \end{enumerate}
\end{lemma}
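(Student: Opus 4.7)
The plan is to prove both items by contradiction through an edge-swap on the matching $X$. Namely, for a well-chosen matching edge $x_iy_i \in E(G[X])$, the set $X' := (X \setminus \{x_i, y_i\}) \cup \{u, v\}$ still induces $tK_2$ in $G$, because $u, v \notin N_G(X)$ makes $\{u, v\}$ non-adjacent to $X \setminus \{x_i, y_i\}$. Hence $X'$ is a competitor for the maximality of $X$, so the largest component of $G - X'$ has size at most $|V(C)|$; we will exhibit a strictly larger component under the failure of each conclusion.

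For Item 1, suppose that $S := C \setminus e = V(C) \setminus \{u, v\}$ is connected. The key observation uses the $K_1 + tK_2$-induced-minor-freeness: if $S$ were adjacent (in $G$) to every vertex of $X$, then taking $A_0 := S$ (a connected subset of $V(G) \setminus X$) together with the $t$ pairs $(\{x_i\}, \{y_i\})$ from $X$ would realize $K_1 + tK_2$ as an induced minor of $G$, a contradiction. Fix $x \in X$ having no neighbor in $S$; since $u, v \notin N_G(X)$, this $x$ has no neighbor in $V(C)$ at all, and reducedness of $G$ forces a non-matching neighbor of $x$ in some other component $C_2 \neq C$ of $G - X$. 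Let $y$ be the matching partner of $x$. If $y$ has a neighbor in $V(C)$, the swap $X' := (X \setminus \{x, y\}) \cup \{u, v\}$ produces a connected subgraph $S \cup \{x, y\} \cup V(C_2)$ of $G - X'$ (connected via $S$ itself, then the edge from $y$ into $S$, then the matching edge $xy$, then $x$'s edge into $C_2$), of size at least $|V(C)| + |V(C_2)| > |V(C)|$, the desired contradiction. Otherwise $y$ too is non-adjacent to $V(C)$, and we re-apply the $K_1 + tK_2$-induced-minor argument to the alternative $tK_2 := \{uv\} \cup (X \setminus \{x, y\})$ (still a valid $tK_2$, since $\{u, v\}$ is non-adjacent to the remaining matching); this forces a new vertex of $X \setminus \{x, y\}$ without a neighbor in $S$, and the same dichotomy recurses. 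Because $V(C)$ is adjacent to at least one vertex of $X$ (as $G$ is connected and $X$ is nonempty), the iteration must reach an $X$-edge with exactly one endpoint adjacent to $V(C)$, triggering the swap contradiction.

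For Item 2, suppose $D$ is a component of $C \setminus e$ with $V(D) \cap N_G(X) = \emptyset$. Then every vertex of $V(D)$ has all its $G$-neighbors in $V(D) \cup \{u, v\}$, so $V(D)$ is a connected component of $G - \{u, v\}$. The swap $X' := (X \setminus \{x_i, y_i\}) \cup \{u, v\}$ now forces $V(D)$ into its own isolated component of $G - X'$ (its only external neighbors $u, v$ lie in $X'$). We argue analogously to Item 1: applying the $K_1 + tK_2$-induced-minor-freeness to the connected subset $V(C) \setminus V(D) \setminus \{u, v\}$ (in place of $S$), together with iterative case analysis, one finds a matching edge $x_iy_i \in E(G[X])$ whose swap produces a component of $G - X'$ merging the remaining pieces of $V(C) \setminus V(D) \setminus \{u, v\}$ with $\{x_i, y_i\}$ and, via the non-$V(C)$ neighbors of $x_i$ or $y_i$, with at least one further component of $G - X$, yielding a total strictly greater than $|V(C)|$: again a contradiction. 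The main obstacle throughout is the iterative case analysis used to select the swapped edge, but since each iteration removes at least one edge of $X$ from consideration, its depth is bounded by $t$.
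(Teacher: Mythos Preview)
Your swap idea is the right instinct, but the argument has several genuine gaps.

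\textbf{Wrong benchmark.} The maximality of $X$ says that for every competitor $X'$ the largest component of $G-X'$ has size at most $|V(C_1)|$, where $C_1$ is a \emph{largest} component of $G-X$. You compare instead to $|V(C)|$, but the lemma allows $C$ to be any component. When $C\neq C_1$, producing a component of $G-X'$ of size $|V(C)|+|V(C_2)|$ need not exceed $|V(C_1)|$, so no contradiction fires.

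\textbf{The iteration stalls.} In Item~1 your recursion replaces the edge $x_1y_1$ of $X$ by $uv$ to keep $t$ matching edges at step~2. At step~3 you would have to drop a second edge $x_2y_2$ while still exhibiting $t$ edges; there is no second spare edge to insert (and $x_1y_1$ cannot be reused since $S$ is non-adjacent to it). Concretely, if $X$ has one edge with both endpoints adjacent to $S$ and two edges with neither endpoint adjacent, the recursion processes the latter two and then has no valid $tK_2$ left. Your termination claim (``$V(C)$ is adjacent to some vertex of $X$, so eventually an edge has exactly one adjacent endpoint'') does not follow: adjacency to \emph{some} vertex of $X$ is compatible with every edge of $X$ having either zero or two endpoints adjacent to $S$.

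\textbf{Item~2 loses too much.} Even granting a successful swap, the component you assemble has size at most $|V(C)|-|V(D)|+|V(C_2')|$: you discard $V(D)$ and gain $\{x_i,y_i\}$ plus one further component $C_2'$. Nothing forces $|V(C_2')|>|V(D)|$, so again no contradiction. Moreover $V(C)\setminus V(D)\setminus\{u,v\}$ is the union of the \emph{other} components of $C\setminus e$ and need not be connected, so it cannot serve as the central branch set.

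The paper sidesteps all three issues with one device: it fixes, independently of $e$, a matching edge $e^*\in E(X)$ that touches both the largest component $C_1$ and some other component $C_2$ (found via a shortest path from $C_1$ to the rest of $G'$). For Item~1 the single swap $X'=(X\setminus e^*)\cup e$ then yields a component of $G-X'$ containing $(V(C_1)\setminus e)\cup e^*\cup(V(C_2)\setminus e)$, of size strictly greater than $|V(C_1)|$ regardless of which component hosts $e$. For Item~2 the paper does \emph{not} swap directly: since $D\cap N_G(X)=\emptyset$ and $G$ is reduced, every vertex of $D$ has at least two neighbours in $V(D)\cup\{u,v\}$; a look at the last BFS layer of $D$ from $e$ then produces an edge $e'$ with both endpoints outside $N_G(X)$ such that $C\setminus e'$ is connected, and one invokes Item~1 on $e'$.
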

\begin{proof}
  As $G$ is reduced, every vertex of $X$ has degree at least~2 (in $G$).
  Thus $G'$ cannot be connected, for otherwise, contracting in $G$ the set $V(G')$ to a single vertex would form a~$K_1+tK_2$ induced minor.
  We thus know that $G'$ has at least two connected components.

  Let $C'$ be a largest connected component of $G'$.
  Since $G$ is connected, there exists a~shortest path $P$ in $G$ from $V(C')$ to $V(G')\setminus V(C')$.
  Say that $P$ ends in a connected component $C \neq C'$ of $G'$.
  Path $P$ has to have some internal vertices in~$X$, but since $G[X] \simeq tK_2$, it follows that there is an edge $e^*$ in $E(X)$ intersecting both $N_G(V(C))$ and $N_G(V(C'))$.
  
  For every edge $e \in E(G'-N_G(X))$ in component $C$ (which is possibly equal to $C'$), $C \setminus e$ is disconnected.
  Indeed, for the sake of contradiction, suppose that $C \setminus e$ is connected, and consider $X' := (X \setminus e^*) \cup e$.
  By assumption, $G[X'] \simeq tK_2$.
  Furthermore, the connected component of $G-X'$ containing $e^*$ is strictly larger than $C'$, as it contains $(V(C') \setminus e) \cup e^*$ and intersects $V(C) \setminus e$, which are two disjoint sets.
  This contradicts the maximality of $X$, and establishes the first item.

  We now prove the second item, also by contradiction.
  Suppose that there is a connected component $D$ of $C \setminus e$ that does not contain a~vertex in $N_G(X)$.
  We will reach a~contradiction by showing that $D$ contains an edge $e'$ not intersecting $N_G(X)$, and such that $D \setminus e'$ is connected (and conclude in light of the previous paragraph).
  
  Let $L_i \subseteq V(D)$ be the \emph{$i$-th neighborhood of $e$} in $D$, i.e., the vertices at distance $i$ of one endpoint of $e$, and at least $i$ of the other endpoint.
  We consider the \emph{last layer} $L_k$, i.e., such that $L_k \neq \emptyset$ and $L_{k+1} = \emptyset$.
  If $L_k$ contains an edge $e'$, then removing the endpoints of this edge does not disconnect $D$ (and hence $C$) since each vertex in $L_k$ has a neighbor in $L_{k-1}$ (with the convention that $L_0$ consists of the endpoints of $e$) and $G[L_0 \cup L_1 \cup \dots \cup L_{k-1}]$ is connected.

  If $L_k$ does not contain any edge, then $k \geqslant 2$.
  Otherwise (if $k=1$), then $D$ has a~single vertex, say, $w$, and $w$ has no neighbors in~$X$.
  Hence, in $G$, $w$ is only adjacent to one or the two endpoints of~$e$, contradicting that $G$ is reduced.

  Furthermore, each vertex in $L_k$ has two neighbors in $L_{k-1}$.
  This is because $G$ has minimum degree at~least~2 (as $G$ is reduced), and by assumption that no vertex of $D$ has a~neighbor in $X$.
  Hence, removing the endpoints of any edge $e'$ incident to a vertex in $L_k$ does not disconnect $D$, nor~$C$, since each vertex in $L_k$ has at least one neighbor in $L_{k-1}$ which is not an endpoint of~$e'$.
  In either case, $e'$~is an edge of $C - N_G(X)$ that does not disconnect $C$, which we showed is not possible. 
\end{proof}

We now prove the main technical result of the section.

\begin{proposition}\label{prop:path-dec}
  Let $G$ be a reduced connected $n$-vertex $K_1 + tK_2$-induced-minor-free graph containing $tK_2$ as an induced subgraph.
  Let $X \subseteq V(G)$ maximize the order of~a largest component of $G' := G - X$, among those sets $X$ such that $G[X] \simeq tK_2$.
  Then for every connected component $C$ of $G'$, a path-decomposition~$\mathcal P$ of $C$ such that 
\begin{itemize}
\item every bag of $\mathcal P$ is contained in $O(t^4)$ consecutive BFS layers of $C$, and
\item every adhesion of $\mathcal P$ is of size at most $2t^2$,
\end{itemize}
can be computed in time $n^{O(1)}$.
\end{proposition}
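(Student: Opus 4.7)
The plan is to build the path decomposition directly from the BFS layers of $C$, using thin layers as adhesions. Pick any $v_0 \in V(C)$ and let $L_0 = \{v_0\}, L_1, L_2, \ldots, L_d$ denote the BFS layers of $C$ from $v_0$. Suppose one can identify a sequence of \emph{cutting indices} $0 = j_0 < j_1 < \cdots < j_k = d$ satisfying $|L_{j_\ell}| \le 2t^2$ for every $\ell \in [1,k-1]$ and $j_{\ell+1} - j_\ell = O(t^4)$ for every $\ell \in [0,k-1]$. Then defining $B_\ell := L_{j_\ell} \cup L_{j_\ell + 1} \cup \cdots \cup L_{j_{\ell+1}}$ for $\ell \in [0, k-1]$ yields a valid path decomposition of $C$ whose bags span $O(t^4)$ consecutive layers and whose adhesions $B_\ell \cap B_{\ell+1} = L_{j_{\ell+1}}$ have size at most $2t^2$. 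The path-decomposition axioms follow immediately from the fact that BFS layers have no edges between non-consecutive layers, and producing the $(j_\ell)$ greedily from the BFS takes linear time.

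The whole task thus reduces to the following \emph{density claim}: in any $h = \Theta(t^4)$ consecutive BFS layers of $C$, at least one has size at most $2t^2$. I would prove this by contradiction. Assume that for some interval $I = [a, a+h]$ every $L_j$ with $j \in I$ contains more than $2t^2$ vertices; I would then extract a $K_1 + tK_2$ induced minor entirely inside $C$ (and hence in $G$), contradicting the hypothesis. The apex $K_1$ would be the connected set $Z := L_0 \cup L_1 \cup \cdots \cup L_{a-1}$ contracted to a single vertex (connected since BFS layers from a single source are so); by construction $Z$ is adjacent in $C$ to every vertex of $L_a$, so it touches each gadget placed below. The $t$ edges of the $tK_2$ would come from $t$ pairwise vertex-disjoint and pairwise non-touching \emph{gadgets} inside the thick segment $\bigcup_{j \in I} L_j$, each gadget consisting of two connected vertex sets joined by a single edge, with both sets reaching layer $L_a$ (so as to touch $Z$).

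The main obstacle is constructing the $t$ pairwise non-touching gadgets. The crucial structural input is Lemma~\ref{lem:edgesinC}: letting $N_C := N_G(X) \cap V(C)$, every edge of $G[V(C) \setminus N_C]$ is a $2$-cut of $C$ whose removal disconnects $C$ into pieces each meeting $N_C$. This universal $2$-cut property forbids the thick segment from collapsing into a ``fat path''-like structure (in which all non-$N_C$ edges would essentially lie on a single line) and forces genuine branching as one descends through the layers. Two nested applications of pigeonhole then supply the gadgets: first, the $>2t^2$ vertices available in each thick layer produce many vertex-disjoint initial paths across consecutive layers; second, the forced branchings across $\Omega(t^2)$ layers allow one to split these paths into $t$ mutually non-touching gadgets, each with both of its pieces reaching $L_a$. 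The two pigeonholes account respectively for the $2t^2$ bound on adhesion and the $O(t^4)$ bound on bag span. Once the density claim is established, the full decomposition is computed in polynomial time by a single BFS in $C$ followed by a linear scan to pick the cutting indices.
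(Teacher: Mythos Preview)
Your reduction to the \emph{density claim} --- that every window of $\Theta(t^4)$ consecutive BFS layers of $C$ contains a layer of size at most $2t^2$ --- is where the proposal breaks down. You do not prove this claim, and the sketch you give is not a proof. Large layers do not by themselves yield $t$ pairwise \emph{non-touching} gadgets: a layer can be arbitrarily large while still being, say, a clique (this is compatible with the $tK_2$-freeness of individual layers), in which case any two gadgets meeting that layer automatically touch. Your appeal to Lemma~\ref{lem:edgesinC} is disconnected from the construction: that lemma controls only edges with \emph{both} endpoints outside $N_G(X)$, places no bound on how many vertices of $N_G(X)$ a layer may contain, and gives no ``forced branching'' in the sense you need. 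The two nested pigeonholes are never specified --- neither the pigeons nor the holes are defined --- so the crux of the argument is missing.

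The paper does \emph{not} prove your density claim; it proves the weaker statement that between $L_s$ and $L_{s+f(t)}$ there is a cutset of size at most $2t^2$, which is then used as the adhesion. This cutset is not a single layer: it consists of the endpoints of edges $e_1,\ldots,e_k$ lying at strictly increasing depths. The edges are found iteratively by taking a shortest $L_s$--$L_{s+f(t)}$ path in $C\setminus\{e_1,\ldots,e_{i-1}\}$ and selecting on it a suitable edge $e_i$ outside $N_G(X)$; Claim~\ref{cl:path_neighbors} guarantees such an edge appears within $O(t^2)$ layers. The bound $k\le t^2$ (Claim~\ref{cl:k-bounded}) is where Lemma~\ref{lem:edgesinC} actually enters: if $k>t^2$, one builds a $K_1+tK_2$ induced minor whose apex is formed by contracting an edge of $X$ together with a component of $C\setminus\{e_{a_1},\ldots,e_{a_t}\}$ containing~$v$ --- crucially using both the structure of $X$ and the maximality hypothesis, not just the BFS layers of $C$. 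Your proposed apex (the contracted upper layers of $C$) and gadgets (entirely inside the thick segment) bypass $X$ altogether and do not exploit the maximality of $X$, which is precisely the leverage the paper's argument needs.
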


\begin{proof}
  Let $v$ be a vertex in $N_G(X) \cap V(C)$ and, for any positive integer $s$, let $L_s$ be the set of vertices at distance $s$ from $v$ in $C$.
  Let $q$ be the largest distance between $v$ and a~vertex of~$C$.
  We set $f(t) := (t^2+1)(6t^2+2)=O(t^4)$.
  We will show that, for every $s \in [q-f(t)]$, there is a~vertex cutset of size at most $2t^2$ separating $L_s$ from $L_{s+f(t)-1}$, and use that fact for $s=1, f(t)+1, 2f(t)+1, \ldots$ to build the path-decomposition $\mathcal P$.
  
We show that any sufficiently long induced path (such as a~shortest path from $L_s$ to $L_{s+f(t)-1}$) has some edges with both endpoints in $V(C) \setminus N_G(X)$.
\begin{claim}
  \label{cl:path_neighbors}
  Any induced path $P$ in $C$ contains less than $3t^2$ vertices in $N_G(X)$. 
\end{claim}
\begin{claimproof}
  If there are $3t^2$ vertices on $P$ with a neighbor in $X$, then there are at least $3t$ vertices $w_1, \ldots, w_{3t}$ on $P$ that are neighbors of a~fixed edge $e \in E(X)$.
  For each $i \in [t]$, contract every edge of $P$ between $w_{3i-2}$ and $w_{3i-1}$ but one, say $e_i$.
  Contract $e$, and call~$z$ the resulting vertex.
  The vertex $z$ and the $t$ edges $e_i$ contradict the fact that $G$ is $K_1+tK_2$-induced-minor-free. 
\end{claimproof}
For an edge $e \in E(C)$ we denote by $\dist(e,v)$ the length of a~shortest path in $C$ from an endpoint of~$e$ to $v$.
We build a collection of paths $Q_i$ of $C$, and edges $e_i \in E(Q_i)$, for $i=1, 2, \ldots$, while they are well-defined, in the following way. 

Let $s \in [q-f(t)]$ and $Q_1$ be a shortest path from $L_s$ to $L_{s+f(t)-1}$ in $C$.
Let $e_1 \in E(Q_1)$ minimize $e \mapsto \dist(e,v)$, among those edges of $Q_1$ with both endpoints in $V(C) \setminus N_G(X)$.
By~\cref{cl:path_neighbors} there are less than $6t^2$ edges on $Q_1$ with an endpoint in $N_G(X)$, hence $\dist(e_1,v) \leq s + 6t^2$.
We denote by $Q'_1$ the maximal subpath of $Q_1$ starting in $L_s$ and not containing an endpoint of~$e_1$ (that is, stopping just before reaching an endpoint of~$e_1$). Note that $Q_1$ is possibly empty. For the next iteration, we work in $C \setminus e_1$ (recall that this stands for $C$ deprived of the two endpoints of $e_1$).

We now describe in general the $i$-th iteration for $i \ge 2$.
Let $Q_i$ be a~shortest path from $L_s$ to $L_{s+f(t)-1}$ in $C \setminus \{e_1,e_2,\dots,e_{i-1}\}$.
Let $e_i$ be the first edge of $Q_i$ (when starting from $L_s$) such that $\dist(e_i,v) \geq \dist(e_{i-1},v)+2$ and $e_i$ has no endpoint in $N_G(X)$.
Note that by~\cref{cl:path_neighbors}, $\dist(e_i,v) \leq \dist(e_{i-1},v)+2+6t^2$.
Let $Q'_i$ be the maximal subpath of $Q_i$ starting in $L_s$ and not containing an endpoint of~$e_i$.
See~\cref{fig:paths-Pi-Qi}.

 \begin{figure}[ht]
    \centering
    \begin{tikzpicture}[vertex/.style={draw,circle,inner sep=-0.08cm}, vertexf/.style={fill,circle,inner sep=-0.07cm},
        vertexi/.style={circle,inner sep=-0.08cm},vertexb/.style={draw,circle,fill=white,inner sep=-0.08cm}]
      \def\t{6}
      \def\s{1}
      \def\ss{0.5}
      \def\dg{green!50!black}
      \foreach \i in {1,...,\t}{
        \node[vertexi] (a\i) at (\s * \i,2) {} ;
        \node[vertexi] (b\i) at (\s * \i,2 + 0.6 * \s) {} ;
      }
      \foreach \i/\j/\k in {{(a1)(b1)(a\t)(b\t)}/X/0.08}{
        \node[draw,rectangle,thick,rounded corners,fit=\i,inner sep=\k cm,fill,fill opacity=0.1] (\j) {} ;
      }
       \foreach \i in {1,...,\t}{
        \node[vertexb] at (\s * \i,2) {} ;
        \node[vertexb] at (\s * \i,2 + 0.6 * \s) {} ;
        \draw (a\i) -- (b\i) ; 
       }
     
      \foreach \i/\j in {1/4,2/4,3/5,4/4,5/5,6/5,7/4,8/5,9/5,11/4,12/3}{
        \foreach \k in {1,...,\j}{
          \node[vertexi] (v\i\k) at (\i,1-\ss * \k + \ss) {} ; 
        }
        \foreach \a/\b/\c in {{(v\i1)(v\i\j)}/{L\i}/0.08}{
        \node[preaction={fill},draw,rectangle,thick,rounded corners,fit=\a,inner sep=\c cm,fill opacity=0.15] (\b) {} ;
      }
      }
       \foreach \i/\j in {1/4,2/4,3/5,4/4,5/5,6/5,7/4,8/5,9/5,11/4,12/3}{
        \foreach \k in {1,...,\j}{
          \node[vertexb] at (\i,1-\ss * \k + \ss) {} ; 
        }
      }
      
      \foreach \i/\j/\k/\f in {0/1/v/f}{
        \node[vertex\f] (\k) at (\i,\j) {} ;
      }
      \foreach \i/\j in {1/2,2/1,2/2,4/4,6/1,7/4,9/1,9/3,11/2}{
        \node[vertexf] at (\i,1-\ss * \j + \ss) {} ;
      }
      
      \foreach \i/\j/\k in {0/0.75/{$v$},0.4/2.3/$X$,10/0/$\ldots$}{
        \node at (\i,\j) {\k} ;
      }

      \foreach \i/\j/\k in {a2/v/0,b1/v/0,v/v11/0,v/v12/0,v/v13/0,v/v14/0}{
        \draw (\i) to [bend left=\k] (\j) ;
      }
      \foreach \i/\j/\k in {v11/v22/blue,v22/v34/blue,v34/v44/blue, v13/v22/\dg,v22/v32/\dg,v32/v43/\dg,v43/v44/\dg,v44/v55/\dg,v55/v65/\dg}{
        \draw[\k,line width=0.1cm] (\i) -- (\j) ;
      }
      \foreach \i/\j/\k in {v12/v23/red,v35/v42/red,v42/v52/red,v52/v61/red,v61/v72/red,v72/v84/red,v84/v92/red, v112/v123/red,
        v44/v53/blue,v64/v72/blue,v84/v94/blue, v113/v121/blue,
      v65/v73/\dg,v82/v93/\dg}{
        \draw[\k,very thick] (\i) -- (\j) ;
      }
       \foreach \i/\j/\k/\h in {v72/v84/blue/20, v112/v123/\dg/20}{
        \draw[\k,very thick] (\i) to [bend left=\h] (\j) ;
      }

      \foreach \i/\j/\k in {v23/v35/red, v53/v64/blue, v73/v82/\dg}{
        \draw[\k,decorate,decoration={snake,amplitude=.5mm},very thick] (\i) -- (\j) ;
      }
    \end{tikzpicture}
    \caption{Illustration of $X$ (for $t=6$) and the first (i.e., with $s=1$) $f(t)$ layers (from left to right) of the connected component $C$ of $G-X$ rooted at $v$.
      Vertices of $C$ with a neighbor in $X$ are filled.
      We depict the first three iterations: $Q_1, e_1$ (in red), $Q_2, e_2$ (in blue), $Q_3, e_3$ (in dark green).
      Not to clutter the figure, we do not represent all the edges, and we code the different labels: edges $e_i$ are squiggly, and subpaths $Q'_i$ are thicker.}
    \label{fig:paths-Pi-Qi}
  \end{figure}

Let $e_1, \ldots, e_k$ be the obtained collection of edges.
In principle, the while loop stops when one of the following two conditions holds:
\begin{enumerate}[$(i)$]
\item $L_s$ is disconnected from $L_{s+f(t)-1}$ in $C \setminus \{e_1, \ldots, e_k\}$, or
\item there is no edge $e_{k+1} \in E(Q_{k+1})$ such that $\dist(e_{k+1},v) \geq \dist(e_k,v)+2$ and $e_{k+1}$ has no endpoint in $N_G(X)$.
\end{enumerate}

\begin{claim}
  \label{cl:no-case2}
  If case $(ii)$ holds, then $k > t^2$.
\end{claim}
\begin{claimproof}
  Remark first that for any~$i$ we have $\dist(e_i,v) \leq s + i(6t^2 + 2) - 2$.
  Next, we claim that as long as $\dist(e_i,v) + 2 + 6t^2 \leq s + f(t) - 1$, and case~$(i)$ does not occur,
  the construction of $e_{i+1}$ cannot fail, since $Q_{i+1}$ exists,
  and there are at least $6t^2$ edges~$e$ in~$Q_{i+1}$ satisfying $\dist(e,v) \ge \dist(e_i,v)$.
  Thus if case~$(ii)$ occurs, it must be that $(k+1)(6t^2 + 2) - 2 > f(t) - 1$,
  hence $k > t^2$ since $f(t) = (t^2+1)(6t^2 + 2)$.
\end{claimproof}

\begin{claim}
  \label{cl:k-bounded}
  It holds that $k \leq t^2$.
\end{claim}
\begin{claimproof}
Assume for the sake of contradiction that $k \geq t^2+1$.
By~\cref{lem:edgesinC} (using both items), for each~$i \in [k]$ there exists a~vertex $v_i \in V(C) \cap N_G(X)$ disconnected from $v$ in $C \setminus e_i$. 
Since $k \geq t^2+1$, there are $t+1$ vertices $v_{a_1}, \ldots, v_{a_{t+1}}$, with $a_1 < a_2 < \ldots < a_t < a_{t+1}$, all adjacent to an endpoint of the same fixed edge $e \in E(X)$.
Let $C'$ be the component containing $v$ in $C \setminus \{e_{a_1}, \ldots, e_{a_t}\}$, and let us contract (in $G$) the set $V(C')$ to a~single vertex, say $w$. We will now show that vertex $w$ is adjacent to an endpoint of~$e$.

By construction, the edge $e_{a_{t+1}}$ is reachable from $v$ in $C \setminus \{e_{a_1}, e_{a_2}, \ldots, e_{a_t}\}$, hence $e_{a_{t+1}} \in E(C')$.
Let $P'$ be a path in $C'$ from $v$ to an endpoint of $e_{a_{t+1}}$.
Every path in $C$ from $v$ to $v_{a_{t+1}}$ goes through at least one endpoint of $e_{a_{t+1}}$, by definition of $v_{a_{t+1}}$.
Let us consider a~shortest path in $C$ from $v$ to $v_{a_{t+1}}$, i.e., intersecting each layer at most once.
Consider a~suffix $P''$ of this path starting at an endpoint of $e_{a_{t+1}}$ and ending at $v_{a_{t+1}}$.
By the previous remark, $P''$ cannot contain an edge among $\{e_{a_1}, e_{a_2}, \ldots, e_{a_t}\}$, since these edges are in layers with strictly smaller indices than the endpoints of~$e_{a_{t+1}}$.
Thus $P' \cup P''$ (possibly combined with ${e_{a_{t+1}}}$) connects $v$ to $v_{a_{t+1}}$ in $C \setminus \{e_{a_1}, e_{a_2}, \ldots, e_{a_t}\}$.
Therefore, the contracted vertex $w$ \emph{contains} $v_{a_{t+1}}$; the latter being adjacent to an endpoint~$e$.

Observe also that, for every $i \in [t]$, each subpath $Q'_{a_i}$ is contained in $C'$.
Vertex $w$ is adjacent to an endpoint $u_{a_i}$ of $e_{a_i}$, for every $i \in [t]$.
Let $C_{a_i}$ be the vertex set of the connected component of $C \setminus e_{a_i}$ containing $v_{a_i}$.
Let $R_{a_i}$ be a~path from $u_{a_i}$ to $v_{a_i}$ in the subgraph of $G$ induced by $C_{a_i}$ plus the endpoints of $e_{a_i}$. 
For each $i \in [t]$, if $R_{a_i}$ has at least two edges, contract the edge $wu_{a_i}$, and all the edges of $R_{a_i}$ but the last two; the last one being $f_{a_i}=y_{a_i}v_{a_i}$.
Finally contract $e$, and the edge between $e$ and $w$.
We call the resulting vertex~$z$.

We claim that $z$ and the edges $f_{a_i}$ make a~$K_1+tK_2$ induced minor in $G$.
One can see that $z$ is adjacent to $v_{a_i}$ via $e \in E(X)$, and to $y_{a_i}$ via $w$ and the path $R_{a_i}$.
We shall justify that the edges $f_{a_i}$ form an induced matching in $G$.
Indeed, suppose some $f_{a_i}$ and $f_{a_j}$ touch with $i \neq j \in [t]$.
Then in $C \setminus e_{a_i}$, there is a path from $v$ to $v_{a_i}$ via $Q'_{a_j}$, a~contradiction.
\end{claimproof}

By \cref{cl:no-case2,cl:k-bounded}, case~$(ii)$ is impossible,
and the $2k \leq 2t^2$ endpoints of $e_1, \ldots, e_k$ form a~vertex cutset disconnecting $L_s$ from $L_{s+f(t)-1}$.
We can now build the path-decomposition $\mathcal P$.
Recall that the BFS search from $v$ gives rise to $q$ layers, $L_1, \ldots, L_q$ (outside $\{v\}$).

For $j \in [\lfloor q/f(t) \rfloor]$, let $S_j$ be the vertex cutset of size at most $2t^2$ (and obtained as detailed above) disconnecting $L_{(j-1)f(t)+1}$ from $L_{jf(t)}$.
We denote by \emph{$L_{s \rightarrow s'}$} the set $\bigcup_{s \leq h \leq s'} L_h$.
\begin{itemize}
\item Let $B_1 \subseteq V(C)$ consist of $v$, $S_1$, plus all the vertices of connected components of $C[L_{1 \rightarrow f(t)}]-S_1$ that do not intersect $L_{f(t)}$.
\item For $j$ going from 2 to $\lfloor q/f(t) \rfloor - 1$,
  let $B_j \subseteq V(C)$ consist of $S_j \cup S_{j+1}$ plus the vertices not already present in one of $B_1, \ldots, B_{j-1}$ of all the connected components of $$C[L_{(j-1)f(t)+1 \rightarrow (j+1)f(t)}]-(S_j \cup S_{j+1})$$ that do not intersect $L_{(j+1)f(t)}$.
\item Let finally $B_{\lfloor q/f(t) \rfloor} \subseteq V(C)$ consist of $S_{\lfloor q/f(t) \rfloor}$ plus the vertices of all the connected components of $C[L_{(\lfloor q/f(t) \rfloor-1)f(t)+1  \rightarrow q}]-S_{\lfloor q/f(t) \rfloor}$ that intersect $L_q$.
\end{itemize}
Let $\mathcal P$ be the path-decomposition $(B_1,B_2,\ldots,B_{\lfloor q/f(t) \rfloor})$.
$\mathcal P$ is indeed a path-decomposition of~$C$, since our process entirely covers $V(C)$, and by virtue of $S_j$ separating $L_{(j-1)f(t)+1}$ from $L_{jf(t)}$.
By construction,
\begin{itemize}
\item every bag intersects at most $2f(t)=O(t^4)$ layers of BFS from vertex~$v$, and
\item for every $j \in [\lfloor q/f(t) \rfloor - 1]$, $B_j \cap B_{j+1}=S_j$, so every adhesion has size at most $2t^2$.
\end{itemize}

Finally note that once $X$ is found, one can find the path-decomposition $\mathcal P$ of $C$ in time~$n^{O(1)}$, since this only involves computing (at most $n$) shortest paths.
\end{proof}

We can now wrap up, using~\cref{prop:path-dec,lem:bounded-diameter}.
\begin{reptheorem}{thm1}\label{thm:K1+tK2}
  For every positive integer $t$, \mis can be solved in polynomial-time $n^{O(t^5)}$ in $n$-vertex $K_1+tK_2$-induced-minor-free graphs.
\end{reptheorem}
\begin{proof}
  Let $G$ be our~$K_1 + tK_2$-induced-minor-free $n$-vertex input graph.
  As including vertices of degree 1 or vertices of degree 2 with adjacent neighbors in the independent set is a safe reduction rule (since it is more generally safe to include a~\emph{simplicial} vertex, i.e., one whose neighborhood is a~clique), we can assume that $G$ is reduced.
  By dealing with the possibly several connected components of $G$ separately, we can further assume that $G$ is connected.
  If $G$ has no $tK_2$ as an induced subgraph, we conclude by invoking Alekseev's result~\cite{Alekseev07,Tsukiyama77}.
  Thus we also assume that $G$ has such an induced subgraph.
  In time $n^{O(t)}$ we find $X \subseteq V(G)$ that maximizes the order of a~largest component of $G' := G - X$, among those sets $X$ such that $G[X] \simeq tK_2$.

  We exhaustively guess the intersection $X'$ of a fixed maximum independent set of $G$ with the set $X$, with an extra multiplicative factor of $2^{2t}$.
  We are now left with solving \smis separately in $C' := C-N_G(X')$ for each connected component $C$ of $G'$.
  By~\cref{prop:path-dec}, we obtain in time $n^{O(1)}$ a~path-decomposition $\mathcal P=(B_1,\ldots,B_p)$ of $C'$, such that
  \begin{itemize}
   \item every $B_i$ (for $i \in [p]$) is contained in $O(t^4)$ consecutive BFS layers \emph{of $C$}, and
   \item every adhesion $A_i := B_i \cap B_{i+1}$ (for $i \in [p-1]$) is of size at most $2t^2$.
  \end{itemize}
  Indeed, removing $N_G(X')$ from $C$ (and from its path-decomposition) preserves those properties.
  
  Let us define $A_0, A_p$ to be empty.
  We proceed to the following dynamic programming.
  For $i \in [0,p]$, and for any $S \subseteq A_i$, $T[i,S]$ is meant to eventually contain an independent set $I$ of $C'[\bigcup_{1 \leq j \leq i} B_j]$ of maximum cardinality among those such that $I \cap B_i=S$.
  We set $T[0,\emptyset]=\emptyset$, and observe that it is the only entry of the form $T[0,\cdot]$.

  We fill this table by increasing value of $i=1, 2, \ldots, p$.
  Assume that all entries of the form $T[i',\cdot]$ are properly filled for $i' < i$.
  For every $S \subseteq A_i$, $T[i,S]$ is filled in the following way.
  For every $S' \subseteq A_{i-1}$, if $S \cup S'$ is an independent set, we compute, by~\cref{lem:bounded-diameter} (with $L_1, \ldots, L_h$ being the $O(t^4)$ consecutive BFS layers \emph{of $C$} containing $B_i$, and $L_0$ being the connected set, \emph{in $C$}, formed by the union of all the previous layers), a~maximum independent set $I_i$ in $C'[B_i]-N[S \cup S']$ in time $n^{O(ht)}=n^{O(t^5)}$.
  We finally set $T[i,S]=T[i-1,S'] \cup I_i \cup S$ for a~run that maximizes the cardinality of $T[i-1,S'] \cup I_i$.

  It takes time $p \cdot 2^{O(t^2)} \cdot n^{O(t^5)}=n^{O(t^5)}$ to completely fill $T$.
  Eventually $T[p,\emptyset]$ contains a~maximum independent set of $C'$.
  We return the union of $X'$ and of the maximum independent sets of $C'$ found for each connected component $C$ of $G'$.
  The overall running time is $n^{O(t^5)}$.
\end{proof}

\section{Quasipolynomial algorithm in $tC_3 \uplus C_4$-induced-minor-free graphs}\label{sec:qp}

We first show that the existence of a~short hole allows for a~quasipolynomial-time branching rule in $tC_3 \uplus C_4$-induced-minor-free graphs.
By~branching rule, we mean here a~Turing reduction to (quasipolynomially many) subinstances with no short holes.

\begin{lemma}\label{lem:if-short-hole}
  Let $G$ be an $n$-vertex $tC_3 \uplus C_4$-induced-minor-free graph, and let $\ell \geqslant 4$ be a~fixed integer.
  While there is a~hole $H$ of length at most $\ell$ and a~$tC_3$ as an induced subgraph in $G$, \mis admits a quasipolynomial branching rule, running in $n^{O_\ell(t \log n)}$.
\end{lemma}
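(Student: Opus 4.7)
The plan rests on the following structural observation.

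\textbf{Observation:} $G - N[V(H)]$ is $tC_3$-induced-minor-free.

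Indeed, suppose for contradiction that $G - N[V(H)]$ admits $tC_3$ as an induced minor, witnessed by pairwise disjoint connected bags $B_1, \ldots, B_{3t}$ contained in $V(G) \setminus N[V(H)]$. These bags are disjoint from $V(H)$, and since no $B_i$ intersects $N[V(H)]$, there is no edge from any $B_i$ to $V(H)$. Taking $V(H)$ as an additional connected bag yields $C_\ell \uplus tC_3$ as an induced minor of $G$. Since $\ell \geq 4$, we can further refine $V(H)$ into the four connected bags $\{v_1\}$, $\{v_2, \ldots, v_{\ell-2}\}$, $\{v_{\ell-1}\}$, $\{v_\ell\}$ (writing $H = v_1 v_2 \cdots v_\ell v_1$), exhibiting $tC_3 \uplus C_4$ as an induced minor of $G$, which contradicts the hypothesis.

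Given this observation, the plan is to compute a maximum independent set $I$ of $G$ via a recursive branching in the spirit of Gartland--Lokshtanov. First, guess $S := I \cap V(H)$ among the at most $2^\ell = O_\ell(1)$ independent subsets of $V(H)$; for each such $S$, commit $S \subseteq I$, delete $V(H) \cup N_G(S)$ from $G$, and proceed recursively on the residual graph with accumulated weight $|S|$. To resolve the interaction of $I$ with $N_G(V(H)) \setminus V(H)$, apply a further Gartland--Lokshtanov-style recursion selecting, at each step, a pivot vertex $v$ and branching on $v \in I$ (delete $N[v]$) versus $v \notin I$ (delete $v$). When the recursion reaches a stage where $I \cap N[V(H)]$ is fully committed, the residual instance is an induced subgraph of $G - N[V(H)]$, which is $tC_3$-induced-minor-free by the observation; I solve it using Bonamy et al.'s algorithm~\cite{Bonamy23} in $n^{O(t \log n)}$ time.

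Following the Gartland--Lokshtanov template, the recursion depth is bounded by $O_\ell(t \log n)$ and the branching factor per level is polynomial in $n$, producing $n^{O_\ell(t \log n)}$ leaves; each leaf is solved in $n^{O(t \log n)}$ time by Bonamy et al., for a total running time of $n^{O_\ell(t \log n)}$. The main technical obstacle lies in the pivot-selection argument: each recursive step must strictly drop a carefully chosen potential, lest the recursion reach depth $\Theta(n)$. The intended strategy is to tailor this potential to the structure of $N[V(H)]$ and its interaction with the remainder of $G$, exploiting the forbidden induced minor $tC_3 \uplus C_4$ to guarantee, at every level, the existence of a ``heavy'' pivot whose two branches both shrink the potential, or else to argue that the instance already lives in the $tC_3$-induced-minor-free regime where Bonamy et al.\ applies directly.
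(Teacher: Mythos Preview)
Your structural observation is correct and is essentially the same insight the paper uses (the paper phrases it as: $N[V(H)]$ meets every $X\in C(G)$, where $C(G)=\{X\in\binom{V(G)}{3t}:G[X]\simeq tC_3\}$). The gap is exactly where you flag it: the pivot selection. You propose to guess $I\cap V(H)$ and then branch vertex-by-vertex through $N(V(H))\setminus V(H)$, but $|N(V(H))|$ can be $\Theta(n)$ and you offer no potential that drops on both branches of such a pivot. Invoking ``Gartland--Lokshtanov-style recursion'' does not help here, because their potential is tailored to long induced paths, not to the present situation; you would need to \emph{name} a potential and a heavy-pivot argument, and you do not.

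The paper's resolution is short and you were close to it. Take $\mu(G)=|C(G)|\le n^{3t}$. Your observation says $N[V(H)]$ hits every $X\in C(G)$; by pigeonhole over the at most $\ell$ vertices of $H$, some single $v\in V(H)$ has $N[v]$ hitting at least a $1/\ell$ fraction of $C(G)$. Branch on \emph{that} $v$: the ``include $v$'' branch deletes $N[v]$ and multiplies $\mu$ by at most $1-1/\ell$, so this branch is taken at most $O_\ell(t\log n)$ times along any root-to-leaf path; the ``exclude $v$'' branch just deletes $v$. The branching tree thus has at most $\binom{n}{O_\ell(t\log n)}=n^{O_\ell(t\log n)}$ leaves. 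Note the pivot lives in $V(H)$, not in $N(V(H))$, and the potential counts $tC_3$ subgraphs globally, not anything local to $N[V(H)]$. Finally, the lemma is only the branching step used \emph{while} a short hole and a $tC_3$ subgraph coexist; calling Bonamy et~al.\ at the leaves belongs to the proof of the main theorem, not to this lemma.
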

\begin{proof}
  Let $C(G) := \{X \in {V(G) \choose 3t}~:~G[X] \simeq tC_3\}$ be the collection of vertex subsets inducing $t$~disjoint triangles, and assume $\mu(G) := |C(G)| > 0$, and $H$ is a hole of $G$ of length at most~$\ell$.
  As $G$ is $tC_3 \uplus C_4$-induced-minor-free, $N[V(H)]$ intersects every $X \in C(G)$.
  In particular, there is a vertex $v \in V(H)$ such that $N[v]$ intersects at least a~$1/\ell$ fraction of the $X \in C(G)$.

  We branch on two options: either we take $v$ in an (initially empty) solution, and remove its closed neighborhood from $G$, or we remove $v$ from $G$ (without adding it to the solution).
  With the former choice, the measure $\mu$ drops by at least a~$1/\ell$ fraction (and the number of vertices of $G$ decreases by at least~1), and with the latter choice, the number of vertices drops by~1.
  This branching is exhaustive.
  We simply need to argue about its running time.

  Note that each option can be done at most $n$ times, while the first option cannot be done more than $\log_\ell(n^{3t})=O_\ell(t \log n)$ times.
  Hence the branching tree has at most ${n \choose \log_\ell(n^{3t})}=n^{O_\ell(t \log n)}$ leaves.
\end{proof}

The previous lemma permits us to get rid of short holes, which turns out useful in some corner case.

\begin{reptheorem}{thm2}
  For every positive integer $t$, \mis can be solved in quasipolynomial-time $n^{O(t^2 \log n)+f(t)}$ (where $f$ is single-exponential) in $tC_3 \uplus C_4$-induced-minor-free graphs.
\end{reptheorem}
\begin{proof}
  We apply the quasipolynomial branching rule of~\cref{lem:if-short-hole} with $\ell=6$, until the input $n$-vertex graph $G$ no longer has holes of length at most~6, or $tC_3$ induced subgraph.

  In time $n^{O(t)}$, we exhaustively look for a~collection of pairwise vertex-disjoint and non-adjacent triangles $T_1, T_2, \ldots, T_{t+2}$ in $G$.
  If such a collection does not exist, $G$ is $(t+2)C_3$-induced-minor-free.
  Indeed, the absence of $(t+2)C_3$ as an induced subgraph implies that at~least one of the $(t+2)$ independent cycles realizing a~$(t+2)C_3$ induced minor is of length at~least four.
  This is ruled out by the assumption that $G$ is $tC_3 \uplus C_4$-induced-minor-free.
  (Note here that only $(t+1)$ independent cycles would suffice.)

  We can thus assume that a collection $T_1, T_2, \ldots, T_{t+2}$ exists, for otherwise, we can conclude with the quasipolynomial-time algorithm, running in $n^{O(t^2 \log n)+f(t)}$ (where $f$ is single-exponential), of Bonamy et al.~\cite{Bonamy23} for \mis in graphs with a bounded number of independent cycles (here, $(t+2)C_3$-induced-minor-free).
  In turn, as $G$ contains $tC_3$ (even $(t+2)C_3$) as an induced subgraph, we can, in light of the first paragraph, further assume that all the cycles of $G$ have length either~3 or at least 7.
 We refer the reader to~\cref{fig:tC3cupC4} for a visual summary of the next two paragraphs.
  
  For every pair $T_i, T_j$ (with $i < j \in [t+2]$),
  consider the subgraph $G_{i,j} := G-\bigcup_{k \in [t+2] \setminus \{i,j\}} N(T_k)$.
  We claim that~$G_{i,j}$ is chordal.
  Indeed, since $G_{i,j}$ is disjoint from the neighborhood of $\bigcup_{k \in [t+2] \setminus \{i,j\}} T_k$,
  a hole~$H$ in~$G$ would form a $tC_3 \uplus C_4$-induced-minor together with $\{T_k~:~k \in [t+2] \setminus \{i,j\}\}$.
  Let now $S_{i,j}$ be a minimal $(T_i,T_j)$-separator in $G_{i,j}$.
  A classical argument then shows that~$S_{i,j}$ is a clique:
  suppose for the sake of contradiction that $u, v \in S_{i,j}$ are distinct and non-adjacent.
  Let~$X_i,X_j$ be the two components of~$G_{i,j} - S_{i,j}$ containing~$T_i,T_j$.
  By minimality of $S_{i,j}$, each of~$u,v$ is adjacent to both~$X_i$ and~$X_j$.
  Thus we can find two induced $uv$-paths, whose internal vertices are in~$X_i$ and~$X_j$ respectively.
  The union of these paths is a hole~$H$ in~$G_{i,j}$.

  Let $N_{i,j}$ be the set $N(T_i) \cap N(T_j)$ for each pair $i < j \in [t+2]$.
  Observe that the sets $N_{i,j}$ need not be disjoint, and that when $T_i$ and $T_j$ are at distance at least 3 apart, $N_{i,j}$ is empty.
  We notice that $N_{i,j}$ is a clique, for otherwise we can exhibit an induced cycle of length 4, 5, or 6 in $G$ (hence a hole of length at most~6).

  \begin{figure}[ht]
    \centering
    \begin{tikzpicture}[vertex/.style={draw,circle,inner sep=-0.08cm}]
      \foreach \i/\j/\k in {0/0/a,1/0/b,0.5/0.866/c, 4/0/d,5/0/e,4.5/0.866/f, 2/-2/g,3/-2/h,2.5/-1.134/i, 2.5/0/x1,2.5/0.4/x2,2.5/-0.4/x3,
      1.5/1/v1,2.5/1/v2,3.5/1/v3}{
        \node[vertex] (\k) at (\i,\j) {} ;
      }
      
      \foreach \i/\j/\k in {0.5/0.3/{$T_i$}, 4.53/0.3/{$T_j$}, 2.53/-1.7/{$T_k$}, 2.95/-0.8/{$N_{i,j}$}, 2.5/1.5/{$S_{i,j}$}, 2.5/0.4/{\footnotesize{$u$}}, 2.5/1/{\footnotesize{$v$}}}{
        \node at (\i,\j) {\k} ;
      }

      \foreach \i/\j/\k in {a/b/0,b/c/0,a/c/0, d/e/0,e/f/0,d/f/0, g/h/0,h/i/0,g/i/0, 
        b/x1/0,x1/d/0,x1/f/0, b/x2/0,x2/f/0, a/x3/-5,b/x3/0,x3/e/-5,x3/f/0,x3/d/0,x3/i/0,x1/i/-55, x1/x2/0,x1/x3/0,x2/x3/-30,
        c/v1/0,v1/v2/0,v2/v3/0,v3/f/0, v2/x2/0,v1/b/0,v1/x2/0,v3/x2/0}{
        \draw (\i) to [bend left=\k] (\j) ;
      }
      \foreach \i/\j/\k in {v1/x2/blue,v1/v2/blue, v2/v3/red,v3/x2/red,
        a/b/black,b/c/black,a/c/black, d/e/black,e/f/black,d/f/black, g/h/black,h/i/black,g/i/black}{
        \draw[\k,very thick] (\i) -- (\j) ;
      }

      \foreach \i/\j/\k in {{(x1)(x2)(x3)}/Nij/0.08, {(x2)(v2)}/Sij/0.12}{
        \node[draw,rectangle,thick,rounded corners,fit=\i,inner sep=\k cm,fill,fill opacity=0.1] (\j) {} ;
      }
    \end{tikzpicture}
    \caption{Illustration of the sets $S_{i,j}$ and $N_{i,j}$, and the two $uv$-paths through the two components, in red and blue.
      If~$uv$ were a non-edge, these paths would form a hole in the non-neighborhood of the other triangles $T_k$, contradicting $tC_3 \uplus C_4$-minor-freeness.
      The absence of hole of length at most~6 implies that $N_{i,j}$ is also a clique.}
    \label{fig:tC3cupC4}
  \end{figure}
  
  We claim that $G' := G - (\bigcup_{i < j \in [t+2]} S_{i,j} \cup N_{i,j})$ is chordal.
  Indeed assume there is a hole $H'$ in~$G'$.
  The $tC_3 \uplus C_4$-induced-minor-freeness implies that $H'$ intersects at least two sets $N[T_i]$ and $N[T_j]$.
  Thus there exists a subpath~$P$ of~$H'$ whose endpoints are in two distinct $N(T_i)$ and $N(T_j)$.
  By choosing~$P$ minimal, we can furthermore assume that no internal vertex of $P$ lies in some $N[T_k]$ with $k \notin \{i,j\}$.
  Since $G'$ does not include any vertex of $\bigcup_{i' < j' \in [t+2]} N_{i',j'}$,
  the endpoints of~$P$ are not in some $N[T_k]$ with $k \notin \{i,j\}$ either.
  Therefore, the path $P$ contradicts that $S_{i,j}$ separates $T_i$ and $T_j$ in $G-\bigcup_{k \in [t+2] \setminus \{i,j\}} N(T_k)$.

  We can now describe the rest of the algorithm after the collection $T_1, T_2, \ldots, T_{t+2}$ is found.
  We greedily compute the minimal separators $S_{i,j}$.
  We exhaustively try every subset $S \subseteq \bigcup_{i < j \in [t+2]} S_{i,j} \cup N_{i,j}$ that is an independent set.
  Such a~set $S$ contains at most one vertex in each $S_{i,j}$ and each $N_{i,j}$, as we have established that each $S_{i,j}$ and each $N_{i,j}$ form a~clique.
  Hence there are $n^{O(t^2)}$ such sets~$S$.
  For each $S$, we compute a~maximum independent set~$I$ in the \emph{chordal} graph $G-(N[S] \cup \bigcup_{i < j \in [t+2]} S_{i,j} \cup N_{i,j})$ in linear time (see~\cite{Gavril74,Rose76}).
  We finally output the set $S \cup I$ maximizing $|S \cup I|$.
  Note that the overall running time is $n^{O(t^2 \log n)+f(t)}$.
\end{proof}

\section{Conclusion}

We provided a~polynomial-time algorithm for \textsc{Maximum Independent Set} on graphs excluding the friendship graph as an induced minor
and a quasipolynomial-time algorithm on graphs excluding a disjoint union of $t$ triangles and a~4-cycle as an induced minor.
As mentioned in the introduction, it is of interest to study which other graph classes excluding some fixed planar graph $H$ as an induced minor \smis can be solved in (quasi)polynomial time.
If $H$ is a~subgraph of the friendship graph or of the wheel, some of our methods developed for $K_1+tK_2$ might extend.

For disjoint unions of cycles, the first open case is when $H$ is $C_4 \uplus C_4$.
Our treatment for $H = tC_3 \uplus C_4$ does not (easily) extend to this case.
It is thus likely that new methods have to be found.
Obtaining a quasipolynomial-time algorithm when $H=tC_t$ for any integer $t$ is a~first challenging milestone in the study of \smis on $H$-induced-minor-free graphs.

\end{document}